\title{Noisy Subspace Clustering via Thresholding}
\author{Reinhard Heckel and Helmut B\"olcskei
 \\[0.5em]
  \multicolumn{1}{p{.7\textwidth}}{\centering Dept. of IT \& EE, ETH Zurich, Switzerland \\[0.5em] Email: \{heckel,boelcskei\}@nari.ee.ethz.ch}}
\date{}
\renewcommand{\vspace}[1]{}
\date{}
\newcommand{\US}[1]{S^{#1-1}} 
\newcommand\PR[1]{\ensuremath{ {\mathrm{P}}\!\left[#1\right]}}
\newcommand\Tex{}
\newcommand\EX[2][\Tex]{
\ifthenelse{\equal{#1}{}}{{\mathbb E}\left[#2\right]}{\ensuremath{{\mathbb E}_{#1}\left[ #2\right]}}}
\newcommand\Var[2][\Tex]{
\ifthenelse{\equal{#1}{}}{{\mathrm{Var} }[#2]}{\ensuremath{\mathrm{Var}_{#1}\left[ #2\right]}}}
\newcommand\ignore[1]{}
\newcommand\defeq{\coloneqq}
\newcommand{\reals}{\mathbb R} 
\newtheorem{definition}{Definition}
\newtheorem{theorem}{Theorem}
\newtheorem*{algorithm*}{Algorithm}
\newcommand{\herm}[1]{{#1}^T} 
\renewcommand{\l}{l}
\renewcommand{\d}{d} 
\newcommand{\aff}{\mathrm{aff}}
\newcommand{\GN}[2]{ \mathcal N(#1,#2) }
 \newcommand{\mbf}[0]{\mathbf }
\newcommand\norm[2][\Tnorm]{\ensuremath{{\left\Vert #2 \right\Vert}_{#1}}}
\newcommand\Tinnerprod{}
\newcommand{\innerprod}[3][\Tinnerprod]{\ifthenelse{\equal{#1}{}}{\ensuremath{\left<#2,#3\right>}}{\ensuremath{\left<#2,#3\right>_{#1}}}}
\newcommand{\cS}{S} 
\renewcommand{\d}{d} 
\newcommand{\X}{\mathcal X} 
\renewcommand{\O}{\mathcal O} 
\newcommand\vect[1]{\mathbf #1}
\newcommand{\va}{\vect{a}}
\newcommand{\ve}{\vect{e}}
\newcommand{\vv}{\vect{v}}  
\newcommand{\vx}{\vect{x}}  
\newcommand{\vy}{\vect{y}}  
\newcommand{\vz}{\vect{z}}
\newcommand{\mA}{\vect{A}}
\newcommand{\mI}{\vect{I}}
\newcommand{\mU}{\vect{U}}
\renewcommand{\S}{\mathcal T}
\newcommand{\q}{q} 
\begin{document}
\maketitle

\begin{abstract}
We consider the problem of clustering noisy high-dimensional data points into a union of low-dimensional subspaces and a set of outliers. 
The number of subspaces, their dimensions, and their orientations are unknown. A probabilistic performance analysis of the thresholding-based subspace clustering (TSC) algorithm introduced recently in \cite{heckel_subspace_2012} shows 
that TSC succeeds in the noisy case, even when the subspaces intersect. 
Our results reveal an explicit tradeoff between the allowed noise level and the affinity of the subspaces. 
We furthermore find that the simple outlier detection scheme introduced in \cite{heckel_subspace_2012} provably succeeds in the noisy case. 
\end{abstract}

\section{Introduction}
\label{sec:intro}

Suppose we are given noisy observations of $N$ data points in $\reals^m$, where each data point either lies in a union of low-dimensional linear\footnote{Note that an affine subspace of dimension $d$ lies in a $(d+1)$-dimensional linear subspace. Assuming the subspaces to be linear therefore comes without loss of generality.} subspaces $\cS_l$ of $\reals^m$, $l=1,...,L$, or is an outlier. Assume that the association of the data points to the subspaces $\cS_l$ and to the set of outliers, the number of subspaces, and their orientations and dimensions are all unknown. 
We consider the problem of identifying the outliers and clustering the remaining data points, i.e., finding their assignment to the subspaces $\cS_l$. 
Once these associations have been identified, it is straightforward to extract approximations\footnote{Since we have access to noisy observations only, we cannot expect to recover the $\cS_l$ exactly. } of the subspaces $\cS_l$ through principal component analysis (PCA). 
The problem we consider is known as subspace clustering 
and has applications 
in, e.g., unsupervised learning, image processing, disease detection 
and, in particular, computer vision, e.g., motion segmentation \cite{vidal_motion_2004,rao_motion_2008} or clustering of images under varying illumination conditions \cite{ho_clustering_2003}. 
Numerous approaches to subspace clustering are available in the literature, including algebraic, statistical, and spectral clustering methods; we refer to \cite{vidal_subspace_2011} for an excellent overview. 

Spectral clustering (SC) methods (see \cite{luxburg_tutorial_2007} for an introduction) have found particularly widespread use. 
Central to SC is the construction of an adjacency matrix $\mA \in \reals^{N\times N}$, where the $(i,j)$th entry of $\mA$ measures the similarity between the data points $\vx_i$ and $\vx_j$, e.g., based on an appropriate distance measure. 
Clustering is then accomplished by identifying the connected components of the graph $G$ with adjacency matrix $\mA$. This is done through a singular value decomposition of the Laplacian of $G$ followed by $k$-means clustering \cite{luxburg_tutorial_2007}. 

We single out two recently proposed SC methods, namely the sparse subspace clustering (SSC) algorithm, introduced by Elhamifar and Vidal \cite{elhamifar_sparse_2009,elhamifar_sparse_2012}, which relies on a clever construction of $\mA$ inspired by ideas from sparse signal recovery, and an algorithm introduced by Liu et al. \cite{liu_robust_2010} that constructs  $\mA$ via a low-rank representation (LRR) of the data points. SSC provably succeeds, in the noiseless case, under very general conditions, as shown in
\cite{soltanolkotabi_geometric_2011} via an elegant (geometric function) analysis. Most importantly, the probabilistic analysis in \cite{soltanolkotabi_geometric_2011} reveals that SSC succeeds even when the subspaces $\cS_l$ intersect\footnote{The linear subspaces $\cS_l$ and $\cS_k$ are said to intersect if $\cS_l \cap \cS_k \neq \{\vect{0}\}$.}. 
A deterministic performance analysis reported in \cite{liu_robust_2010} 
shows that LRR succeeds provided the subspaces are independent, which implies that the subspaces must not intersect. 

While in the noiseless case analytical results are available for some clustering algorithms, the literature is essentially void of theoretical results on the performance of clustering algorithms in the presence of noise. Vidal noted in \cite{vidal_subspace_2011} that ``the development of theoretical sound algorithms [...] in the presence of noise and outliers is a very important open challenge''. 
A significant step towards addressing this challenge was reported recently in \cite{soltanolkotabi_robust_2013}, posted while the present manuscript was being finalized. The robust SSC (RSSC) algorithm in \cite{soltanolkotabi_robust_2013} essentially replaces the $\ell_1$-minimization steps in SSC by $\ell_1$-penalized least squares, i.e., LASSO, steps. The RSSC algorithm provably clusters data points corrupted by Gaussian noise under quite general conditions on the relative orientations of the subspaces $\cS_l$ (in particular, the $\cS_l$ are allowed to intersect) 
and on the number of points in each subspace. 
However, the construction of the adjacency matrix $\mA$ requires the solution of $N$ $\ell_1$-minimization problems in SSC and $N$ LASSO instances in RSSC; this poses significant computational challenges for large data sets. In the LRR algorithm (and its variant for the noisy case) the construction of $\mA$ requires the minimization of the nuclear norm of an $N\times N$ matrix, again resulting in significant computational challenges for large data sets. 
%
A computationally much less demanding SC-based algorithm was introduced recently in \cite{heckel_subspace_2012}. This algorithm, termed thresholding-based subspace clustering (TSC), applies SC to an adjacency matrix $\mA$ obtained by thresholding correlations between the data points, i.e., by finding the nearest neighbors (in terms of correlation) of each data point. 
For the noiseless case it was shown in \cite{heckel_subspace_2012} that TSC provably succeeds under quite general conditions, in particular, even when the subspaces intersect. 
While SSC shares these desirable properties, and has essentially identical performance guarantees, TSC is computationally much less demanding, as the construction of the adjacency matrix in TSC requires the computation of $N^2$ inner products followed by thresholding only.  
\paragraph*{Contributions:}
The aim of this paper is to analyze the performance of TSC in the noisy case. Specifically, we show that TSC provably succeeds under the influence of additive Gaussian noise, and does so under very general conditions on the relative orientations of the subspaces and on the number of points in the subspaces. In particular, the subspaces are allowed to intersect. Our analysis furthermore shows that the more distinct the orientations of the subspaces, the more noise TSC tolerates. Interestingly, TSC can succeed even under massive noise on the data points, 
provided that the subspaces are sufficiently low-dimensional. 
Finally, we show that the simple scheme for outlier detection introduced in \cite{heckel_subspace_2012} provably succeeds  in the noisy case as well.  
Detailed proofs of the theorems in this paper, additional results on clustering noisy data points, and numerical results for real data sets are provided in \cite{heckel_robust_2013}.

\paragraph*{Notation:} We use lowercase boldface letters to denote (column) vectors, e.g., $\vx$, and uppercase boldface letters to designate matrices, e.g., $\mA$. 
For the vector $\vx$, $[\vx]_q$ and $x_q$ denote its $q$th entry and for the matrix $\mA$, $\mA_{ij}$ stands for the entry in the $i$th row and $j$th column. 
The spectral norm of $\mA$ is $\norm[2\to 2]{\mA} \defeq\;$ $\max_{\norm[2]{\vv} = 1  } \norm[2]{\mA \vv}$, its Frobenius norm is $\norm[F]{\mA} \defeq (\sum_{i,j} |\mA_{ij}|^2 )^{1/2}$, and $\mI$ denotes the identity matrix. 
The superscript $\herm{}$ stands for transposition. $\log(\cdot)$ refers to the natural logarithm, and $x \land y$ denotes the minimum of $x$ and $y$. 
 The cardinality of the set $\S$ is $|\S|$. 
The set $\{1,...,N\}$ is written as $[N]$. 
We use $\mathcal N( \boldsymbol{\mu},\boldsymbol{\Sigma})$ to designate a Gaussian random vector with mean $\boldsymbol{\mu}$ and covariance matrix $\boldsymbol{\Sigma}$. 
The unit sphere in $\reals^m$ is $\US{m} \defeq \{ \vx \in \reals^m \colon \norm[2]{\vx} = 1 \}$. 

\vspace{-0.1cm}
\section{Problem statement and the TSC algorithm}

The formal statement of the problem we consider is as follows. 
Suppose we are given a set of $N$ data points in $\reals^m$, denoted by $\X$, and assume that 
$
\X = \X_1 \cup ...  \cup  \X_L \cup  \O
$, 
where $\O$ denotes a set of outliers and the points in $\X_l, l \in [L]$, are given by
$
\vx_j^{(l)} = \vy_j^{(l)}  + \,\ve^{(l)}_j 
$, $j\in [n_l]$, where $n_\l = |\X_l|$. Here, $\vy_j^{(l)} \in \cS_l$ with $\cS_l$ a $d_l$-dimensional subspace of $\reals^m$, and $\ve^{(l)}_j \in \reals^m$ is i.i.d.~(across $l$ and $j$)~$\GN{\mbf 0}{(\sigma^2/m)\mI}$ noise. 
The association of the points in $\X$ with the $\X_l$ and $\O$, the number of points in each subspace $n_l = |\X_l|$, the number of subspaces $L$, their dimensions $d_l$, and their orientations are all unknown. We want to cluster the (noisy) points in $\X$, i.e., find their assignment to the sets $\X_l, \O$.

We next briefly summarize the TSC algorithm introduced in \cite{heckel_subspace_2012}. 
The formulation of the TSC algorithm given below assumes that outliers have already been removed from $\X$, e.g., through the outlier detection scheme 
discussed in Sec.~\ref{sec:detoutl}. Moreover, for Step 1 below to make sense, we assume that the data points in $\X$ are either normalized or of comparable norm. This assumption is not restrictive as the data points can be normalized prior to clustering. 

\vspace{0.1cm}
{\bf The TSC algorithm.} 
Given a set of data points $\X$ and the parameter $q$ (the choice of $q$ is discussed below), perform the following steps:

{\bf Step 1:} For every $\vx_j \in \X$, find the set $\S_j \subset [N] \setminus j$ of cardinality $q$ defined through
\begin{equation*}
\left| \innerprod{\vx_j}{ \vx_i} \right| \geq \left| \innerprod{\vx_j}{ \vx_p} \right| \text{ for all }  i \in \S_j \text{ and all } p \notin \S_j
\end{equation*}
and let $\vz_j \in \reals^N$ be the vector with $i$th entry $\left| \innerprod{\vx_j}{ \vx_i} \right|$ if $i\in \S_j$, and $0$ if $i\notin \S_j$. 
Construct the adjacency matrix $\mA$ according to $\mA_{ij} = |  [\vz_j]_i | + | [\vz_i]_j |$. 

{\bf Step 2:} Estimate the number of subspaces using the eigengap heuristic \cite{luxburg_tutorial_2007}  according to
$
\hat L = \arg  \max_ {i=1,..., N-1}  \allowbreak(\lambda_{i+1} - \lambda_{i}),
$
where $\lambda_1\leq \lambda_2 \leq ... \leq \lambda_N$ are the eigenvalues of the normalized Laplacian of the graph with adjacency matrix $\mA$. 

{\bf Step 3:} Apply normalized SC \cite{luxburg_tutorial_2007} to $(\mA, \hat L)$. 

\vspace{0.5em}
TSC is said to succeed if the following subspace detection property holds. 
\begin{definition}
The subspace detection property holds for $\X = \X_1 \cup \,...  \,\cup \, \X_L$ and adjacency matrix $\mA$ if

\hspace{-0.1cm}{\it i.} $\mA_{ij} \!\neq \!0$ only if $\vx_i$ and $\vx_j$ belong to the same set $\X_l$

\noindent and if

\hspace{-0.1cm}{\it ii.} for every $i \in[N]$, $\mA_{ij} \neq 0$ for at least $\q$ points $\vx_j$ that belong to the same set $\X_l$ as $\vx_i$. 
\label{def:lsdp}
\end{definition}
\vspace{-0.2cm}
The subspace detection property is similar to the $\ell_1$ subspace detection property introduced in \cite{soltanolkotabi_geometric_2011}. The corresponding notion in \cite{soltanolkotabi_robust_2013} is that of $\mA$ having ``no false discoveries'' and at least $\q$ ``true discoveries'' in each row/column. 
The subspace detection property guarantees that  each node in the Graph $G$ (with adjacency matrix $\mA$) is connected to at least $q$ other nodes, all of which correspond to points within the same subspace. 
Note that even when the subspace detection property does not hold strictly, but the $\mA_{ij}$ for pairs $\vx_i,\vx_j$ belonging to different subspaces are ``small enough'', TSC may still cluster the data correctly, owing to the SC step.

\paragraph*{Choice of $q$:}
Recall that $q$ is an input parameter to the TSC algorithm. 
Choosing $q$ too small/large will lead to over/under-estimation of the number of subspaces $L$. 
Our analytical performance results ensure that the subspace detection property holds given that $\q$ is sufficiently small relative to the $n_\l$. Once this condition is satisfied, the specific choice of $\q$ does not matter in terms of our analytical performance guarantees. 

\vspace{-0.14cm}
\section{\label{sec:clustnoise}Performance guarantees}
\vspace{-0.05cm}
In order to elicit the impact of the relative orientations of the subspaces $\cS_l$ on the performance of TSC, we take the subspaces $\cS_l$ to be deterministic and choose the points in the subspaces randomly. To this end, we represent the data points in $\cS_l$ by  $\vy_j^{(l)} = \mU^{(l)} \va^{(l)}_j$, where $\mU^{(l)} \in \reals^{m\times \d_\l}$ is a deterministic orthonormal basis for the $\d_\l$-dimensional subspace $\cS_l$ and the $\va^{(l)}_j \in \reals^{\d_\l}$ are random vectors i.i.d.~uniformly distributed on $\US{\d_\l}$. Since the $\mU^{(l)}$ are orthonormal, the data points $\vy_j^{(l)}$ are  uniformly distributed on the set of points in $\cS_l$ with unit norm. 
The performance guarantees we obtain are expressed 
 in terms of the affinity between subspaces defined as
\cite[Def.~2.6]{soltanolkotabi_geometric_2011}, \cite[Def.~1.2]{soltanolkotabi_robust_2013}:
\[
\aff(\cS_k,\cS_l) \defeq \frac{1}{\sqrt{\d_k \land \d_\l }}\norm[F]{ \herm{\mU^{(k)}} \mU^{(l)}  }. 
\]
Note that $0 \leq \aff(\cS_k,\cS_l) \leq 1$, with $\aff(\cS_k,\cS_l) = 1$ if $\cS_k=\cS_l$ and $\aff(\cS_k,\cS_l) = 0$ if $\cS_k$ and $\cS_l$ are orthogonal. 
Moreover, $\aff(\cS_k,\cS_l) =\allowbreak \sqrt{ \cos^2( \theta_1) + ...+ \cos^2(\theta_{\d_k \land \d_\l})}/\sqrt{\d_k \land \d_\l}$, where $\theta_1 \leq ... \leq \theta_{\d_k \land \d_\l}$ denote the principal angles between $\cS_k$ and $\cS_l$. 
If $\cS_k$ and $\cS_l$ intersect in $p$ dimensions, i.e., if $\cS_k \cap \cS_l$ is $p$-dimensional, then $\cos(\theta_1)=...=\cos(\theta_{p})=1$ and hence $\aff(\cS_k,\cS_l) \geq \sqrt{p/(\d_k \land \d_\l)}$.

\begin{theorem}
Suppose $\X$ is obtained by choosing, for each $l\in [L]$, $n_l=\rho_l \q$, with $\rho_l\geq 6$, points corresponding to $\cS_\l$ at random according to $\vx_j^{(\l)} = \mU^{(\l)} \va^{(\l)}_j  + \ve^{(\l)}_j$, where the $\va^{(l)}_j$ are i.i.d.~uniform on $\US{\d_\l}$ and the $\ve^{(\l)}_j$ are i.i.d.~$\mathcal N( \vect{0}, (\sigma^2/m)  \mI)$ and independent of the $\va^{(\l)}_j$.
Suppose further that 
\begin{align}
\max_{k,l\colon k \neq l}  \aff(\cS_k,\cS_l)  \!+\!  \frac{\sigma(1+\sigma)}{\sqrt{\log N}} \frac{\sqrt{d_{\max}}}{\sqrt{m}}  \leq \frac{1}{12  \log N}
\label{eq:condthmnoisycase}
\end{align}
with $m \geq 6 \log N$, where $\d_{\max}  = \max_l \d_l$. 
Then, the subspace detection property holds (for $\X$) with probability at least 
$
1 - \frac{10}{N} - \sum_{l\in [L]} n_l e^{-c(n_l-1)}
$, 
where $c>0$ is an absolute constant. 
\label{thm:noisycase}
\end{theorem}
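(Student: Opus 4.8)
The plan is to reduce the subspace detection property to a single clean event for each data point, and then control its two sides separately by concentration. Fix a point $\vx_j^{(l)}$ and let $\tau_j$ be the $\q$-th largest value of $|\innerprod{\vx_j^{(l)}}{\vx_i^{(l)}}|$ over the remaining points $\vx_i^{(l)}$ of the \emph{same} subspace, and let $\mu_j \defeq \max_{k \neq l} \max_i |\innerprod{\vx_j^{(l)}}{\vx_i^{(k)}}|$ be the largest correlation with a point of any \emph{other} subspace. If $\mathcal{G}_j \defeq \{\tau_j > \mu_j\}$ holds, then the $\q$ nearest neighbors (in $|\innerprod{\cdot}{\cdot}|$) of $\vx_j^{(l)}$ all lie in $\X_l$, since there are already $\q$ same-subspace correlations strictly above every cross-subspace one. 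Hence $\S_j$ contains only in-subspace points, so $\mA_{ij} \neq 0$ forces $\vx_i, \vx_j$ into the same $\X_l$ (property \emph{i}), and each $\vx_j^{(l)}$ keeps its $\q$ in-subspace neighbors (property \emph{ii}). It therefore suffices to lower bound $\tau_j$, upper bound $\mu_j$, and take a union bound, so that $\PR{\bigcap_j \mathcal{G}_j}$ meets the claimed value.

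For the lower bound on $\tau_j$ (the ``true discoveries''), write $\innerprod{\vx_j^{(l)}}{\vx_i^{(l)}} = \innerprod{\va_j}{\va_i} + r_{ji}$, where the leading term is the inner product of two vectors uniform on $\US{\d_l}$ and $r_{ji}$ collects the noise cross-terms. A spherical-cap estimate gives $\PR{|\innerprod{\va_j}{\va_i}| \geq c_0/\sqrt{\d_l}} \geq \beta$ for absolute constants $c_0, \beta > 0$, and since $n_l - 1 \geq 5\q$, a binomial (Chernoff) tail shows that at least $\q$ of the $n_l-1$ same-subspace correlations exceed $c_0/\sqrt{\d_l}$ except with probability at most $e^{-c(n_l-1)}$. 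On this event $\tau_j \geq c_0/\sqrt{\d_l} - \max_i |r_{ji}|$, and the union over the $n_l$ points in each $\cS_l$ and over $l$ contributes exactly the $\sum_l n_l e^{-c(n_l-1)}$ term in the failure probability.

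For the upper bound on $\mu_j$ (the ``no false discoveries''), I condition on $\vx_j^{(l)}$ and split $\innerprod{\vx_j^{(l)}}{\vx_i^{(k)}} = \transp{\va_j}\herm{\mU^{(l)}}\mU^{(k)}\va_i^{(k)} + (\text{noise})$. The signal part equals $\innerprod{\vb}{\va_i^{(k)}}$ with $\vb \defeq \herm{\mU^{(k)}}\mU^{(l)}\va_j$, and here $\EX{\norm[2]{\vb}^2} = \norm[F]{\herm{\mU^{(k)}}\mU^{(l)}}^2/\d_l = (\d_k \land \d_l)\,\aff(\cS_k,\cS_l)^2/\d_l$ while $\norm[2\to2]{\herm{\mU^{(k)}}\mU^{(l)}} \leq 1$. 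Concentration of $\norm[2]{\vb}$ together with the concentration of the linear functional $\innerprod{\vb}{\va_i^{(k)}}$ on $\US{\d_k}$, and a union bound over the $O(N^2)$ cross pairs, then yield $\mu_j^{\mathrm{sig}} \lesssim \aff(\cS_k,\cS_l)/\sqrt{\d_l}$ up to a factor logarithmic in $N$. This is the step where the affinity enters \eqref{eq:condthmnoisycase}, and it is the part I expect to be the main obstacle: the test direction $\va_j$ appears both in $\vb$ and, through $\tau_j$, in the competing lower bound, so the two events must be decoupled by conditioning, and the spherical concentration has to be sharp enough to keep the affinity prefactor clean and to match the $\log N$ in the denominator of \eqref{eq:condthmnoisycase}.

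It remains to bound the noise, both $r_{ji}$ and the noise part of $\mu_j$, uniformly. With $\ve_i^{(k)} \sim \GN{\vect 0}{(\sigma^2/m)\mI}$, the signal-times-noise terms are Gaussian of standard deviation $O(\sigma/\sqrt{m})$ and the noise-times-noise terms $\innerprod{\ve_j}{\ve_i}$ concentrate at scale $O(\sigma^2/\sqrt{m})$ by a Hanson--Wright estimate; combined with $\norm[2]{\vx_j^{(l)}}^2 = 1 + O(\sigma^2)$, a union bound over all pairs (which, using $m \geq 6\log N$, costs only $O(1/N)$ and supplies the $10/N$ term) bounds every noise perturbation by $O\!\big(\sigma(1+\sigma)\sqrt{\log N}/\sqrt{m}\big)$. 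Comparing both the affinity bound and the noise bound against the in-subspace threshold $c_0/\sqrt{\d_l} \geq c_0/\sqrt{d_{\max}}$ shows that $\tau_j > \mu_j$ for every $j$ as soon as the left-hand side of \eqref{eq:condthmnoisycase} does not exceed $1/(12\log N)$, which is precisely the hypothesis; collecting the two failure contributions gives the stated probability.
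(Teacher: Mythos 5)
Your proposal is correct and follows essentially the same route as the paper's proof: the same reduction to the per-point event that the $\q$-th largest in-subspace correlation exceeds every cross-subspace correlation, the same decomposition into spherical signal terms plus Gaussian cross-noise terms, and the same three bounds (a binomial/Chernoff order-statistic bound yielding the $\sum_l n_l e^{-c(n_l-1)}$ term, concentration of $\herm{\mU^{(k)}} \mU^{(l)} \va_j$ and of the linear functional on the sphere yielding the affinity term, and Gaussian plus Hanson--Wright-type bounds at scale $\sigma(1+\sigma)\sqrt{\log N}/\sqrt{m}$ yielding the noise and $10/N$ terms), assembled by the identical union-bound and threshold comparison that produces Cond.~\eqref{eq:condthmnoisycase}. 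There is nothing substantive to add.
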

\begin{proof}
A sketch is provided in the appendix. 
\end{proof}

We next interpret Thm.~\ref{thm:noisycase} separately in the noiseless and in the noisy cases. 

\paragraph*{The noiseless case:}
In the noiseless case, i.e., for $\sigma = 0$, Thm.~\ref{thm:noisycase} states that TSC succeeds with high probability if the maximum affinity between different subspaces 
is sufficiently small, and if $\X$ contains sufficiently many points of each subspace. Note that Thm.~\ref{thm:noisycase} (for $\sigma=0$) does not impose any restrictions on the dimensions of the subspaces; the only dependence on the subspaces is via the affinity in \eqref{eq:condthmnoisycase}.  
We furthermore observe that for increasing $n_\l$, the probability of success in Thm.~\ref{thm:noisycase} increases, while Cond.~\eqref{eq:condthmnoisycase} becomes slightly harder to satisfy as the RHS of \eqref{eq:condthmnoisycase} decreases, albeit slowly, in $N = \sum_l n_l$. 
\paragraph*{The noisy case:}
In the noisy case, Thm.~\ref{thm:noisycase} states that TSC succeeds with high probability 
if the noise variance and the affinities between the subspaces are sufficiently small, and if $\X$ contains sufficiently many points of each subspace. Cond.~\eqref{eq:condthmnoisycase} nicely reflects the intuition that the more distinct the orientations of the subspaces, the more noise TSC tolerates. 
What is more, Cond.~\eqref{eq:condthmnoisycase} reveals that TSC can succeed even when the noise variance $\sigma^2$ is large, provided that $\sqrt{\d_{\max}/m}$ is sufficiently small. 

The intuition behind the factor $\sigma (1+\sigma) \sqrt{\d_{\max}/m}$ in \eqref{eq:condthmnoisycase} is as follows. 
Assume, for simplicity, that $\d_l = \d$, for all $l$, and consider the most favorable situation of orthogonal subspaces, i.e., $\aff(\cS_k,\cS_l)= 0$, for all $k\neq l$. TSC relies on the inner products between points within a given subspace to typically be larger than the inner products between points in distinct subspaces. 
First, note that $\innerprod{\vx_j}{\vx_i} = \innerprod{\vy_j}{\vy_i} + \innerprod{\ve_j}{\ve_i} + \innerprod{\vy_j}{\ve_i} + \innerprod{\ve_j}{\vy_i}$. Then, under the probabilistic data model of Thm.~\ref{thm:noisycase}, we have 
$
\left(\EX{ \left| \innerprod{\vy_j}{\vy_i} \right|^2}\right)^{1/2} = \frac{1}{\sqrt{d}}
$
if $\vy_j, \vy_i \in \cS_l$ and 
$
\innerprod{\vy_j}{\vy_i}  = 0
$
if $\vy_j \in \cS_k$ and $\vy_i \in \cS_l$, with $k\neq l$. When the terms $\innerprod{\ve_j}{\ve_i}$, $\innerprod{\vy_j}{\ve_i}$, and  $\innerprod{\ve_j}{\vy_i}$ are small relative to $\frac{1}{\sqrt{d}}$, we have a margin on the order of $\frac{1}{\sqrt{d}}$ to separate points within a given cluster from points in other clusters. Indeed, if $\frac{\sigma}{\sqrt{m}}$ is small relative to $\frac{1}{\sqrt{d}}$, $\innerprod{\vy_j}{\ve_i}$ and $\innerprod{\ve_j}{\vy_i}$ are (sufficiently) small, while $\frac{\sigma^2}{\sqrt{m}}$ being small relative to  $\frac{1}{\sqrt{d}}$ ensures that $\innerprod{\ve_j}{\ve_i}$ is (sufficiently) small. 
These two conditions are satisfied when $\sigma (1+\sigma) \sqrt{\d/m}$ is (sufficiently) small.

\paragraph*{Comparison of TSC with SSC and RSSC:}
For SSC in the noiseless and RSSC in the noisy case, results analogous to Thm.~\ref{thm:noisycase} were reported in \cite[Thm.~2.8]{soltanolkotabi_geometric_2011} and \cite[Thm.~3.1]{soltanolkotabi_robust_2013}, respectively. 
While TSC is based on a ``local'' criterion, namely the comparison of inner products of pairs of data points, SSC and RSSC employ a more ``global'' criterion by finding a sparse representation of each data point in terms of all the other data points. In the light of this observation it is interesting to see that the clustering conditions and the performance guarantees of TSC on the one hand and SSC and RSSC on the other hand, are essentially identical. 
Specifically, the clustering condition for RSSC in \cite{soltanolkotabi_robust_2013} is identical (up to constants) to \eqref{eq:condthmnoisycase} with $\sigma(1+\sigma)$ in \eqref{eq:condthmnoisycase} replaced by $\sigma$. 
We note, however, that \cite{soltanolkotabi_robust_2013} requires $\sigma$ to be bounded, an assumption not needed here. Obviously for $\sigma$ bounded, the factor $\sigma(1+\sigma)$ in Cond.~\eqref{eq:condthmnoisycase} can be replaced by $\sigma$ times a constant. 
Note that both TSC and RSSC have input parameters, $\q$ for TSC and the LASSO-weight $\lambda$ for RSSC. 
Finally, thanks to the simplicity of TSC, the proof of Thm.~\ref{thm:noisycase} is conceptually and technically less involved than the proof of the corresponding (main) result for RSSC in \cite{soltanolkotabi_robust_2013}. 

\newcommand{\p}[0]{p}
\vspace{-0.2cm}
\section{\label{sec:detoutl}Outlier detection}
\vspace{-0.1cm}
Outliers are data points that do not lie in one of the (low-dimensional) subspaces $\cS_l$ and do not exhibit low-dimensional structure. 
Here, this is accounted for by assuming random outliers distributed as $\mathcal N(\mbf 0, (1/m) \mI )$. Note that this implies that the direction of the outliers, i.e., $\vx_i/\norm[2]{\vx_i}$, is uniformly distributed on $\US{m}$, and for $m$ large, the norm of the outliers $\vx_i$ concentrates around one. 
In this section, we normalize the inliers to ensure that outlier separation can not trivially be accomplished by exploiting differences in the norm between inliers and outliers. 
We consider the outlier detection scheme introduced for the noiseless case in \cite{heckel_subspace_2012}, which declares $\vx_j$ an outlier if 
\vspace{-0.1cm}
\begin{align}
\max_{p\neq j} \left| \innerprod{\vx_p}{\vx_j}\right| <  c \sqrt{ \log N} / \sqrt{m}
\label{eq:outldetrulenc}
\vspace{-0.1cm}
\end{align} 
where $c$ is a suitably chosen constant.

\begin{theorem}
Suppose $\X$ consists of $N_0$ outliers chosen at random i.i.d.~$\mathcal N(\vect{0},(1/m) \mI)$, and of $\sum_l n_l$ inliers obtained as follows. For each $l \in [L]$, choose $n_l$ points corresponding to $\cS_\l$ at random according to $\vx_j^{(l)} = \frac{1}{\sqrt{1+\sigma^2}} \allowbreak \left(  \mU^{(l)} \va^{(l)}_j  + \ve^{(l)}_j \right)$, where the $\va^{(k)}_j$ are i.i.d.~uniform on $\US{\d_\l}$ and the $\ve^{(l)}_j$ are i.i.d.~$\mathcal N(\vect{0},(\sigma^2/m) \mI )$ distributed, and independent of the $\va^{(l)}_j$. 
Declare $\vx_j \in \X$ to be an outlier if \eqref{eq:outldetrulenc} holds with $c=2.3 \sqrt{6}$. 
Then, every outlier is detected with probability at least $1 -3 \frac{N_0}{N^2}$, where $N=N_0 + \sum_\l n_\l$. Moreover, provided that
\begin{align}
\frac{d_{\max}}{m} \leq  \frac{c_1}{(1+\sigma^2)^2 \log N}
\label{eq:condnoisyoutldet}
\end{align}
where $c_1$ is an absolute constant and $d_{\max} = \max_l \d_\l$, for each $l\in [L]$, with probability at least
$
1-  n_l\left(e^{-\frac{1}{2}\log\left(\frac{\pi}{2}\right) (n_l-1) }  +  n_l \frac{7}{N^3}\right) 
$
no inlier belonging to $\cS_l$ is misclassified as an outlier.
\label{thm:outldetnoisyc}
\end{theorem}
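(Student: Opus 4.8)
The plan is to treat the two claims separately, since the first is about correctly flagging outliers and the second about not flagging inliers, and the two require genuinely different arguments.

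For the first claim, fix an outlier $\vx_j \sim \GN{\vect 0}{(1/m)\mI}$. Because $\vx_j$ is isotropic and independent of all other points, conditioning on any fixed $\vx_p$, $p \neq j$, makes $\innerprod{\vx_p}{\vx_j}$ a centered Gaussian with variance $\norm[2]{\vx_p}^2/m$. I would first establish that, with high probability, $\norm[2]{\vx_p}^2$ is bounded by an absolute constant for \emph{all} $p$ simultaneously: for outliers this is $\chi^2$-concentration of $\frac1m \norm[2]{\vg}^2$, and for the (normalized) inliers the same concentration applied to $\norm[2]{\ve_p}^2$ together with $\norm[2]{\mU^{(l)}\va_p}=1$. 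Conditioned on this norm bound, a standard Gaussian tail estimate gives a bound of the form $\PR{|\innerprod{\vx_p}{\vx_j}| \geq c\sqrt{\log N}/\sqrt m} \leq 2 N^{-c^2/C_0}$ for a suitable absolute constant $C_0$, and a union bound over the $N-1$ indices $p$ shows that $\vx_j$ obeys the detection rule \eqref{eq:outldetrulenc} except with probability at most $3/N^2$; the value $c = 2.3\sqrt 6$ is chosen precisely so this holds with room to absorb the norm-concentration failure probability. A final union bound over the $N_0$ outliers yields $1 - 3N_0/N^2$.

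For the second claim I would fix an inlier $\vx_j \in \cS_l$ and prove the complementary event $\max_{p\neq j}|\innerprod{\vx_j}{\vx_p}| \geq c\sqrt{\log N}/\sqrt m$, which guarantees $\vx_j$ is not declared an outlier. It suffices to consider only the other $n_l - 1$ inliers of $\cS_l$, for which $\innerprod{\vx_j}{\vx_p} = \frac{1}{1+\sigma^2}\big(\innerprod{\va_j}{\va_p} + \innerprod{\mU^{(l)}\va_j}{\ve_p} + \innerprod{\ve_j}{\mU^{(l)}\va_p} + \innerprod{\ve_j}{\ve_p}\big)$, so that the signal part $\frac{1}{1+\sigma^2}|\innerprod{\va_j}{\va_p}|$, of order $1/((1+\sigma^2)\sqrt{d_l})$, must dominate the three noise cross-terms and clear the threshold. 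The heart of the argument is a lower bound on the best in-subspace alignment. Conditioning on $\va_j$, the quantity $\innerprod{\va_j}{\va_p}$ is distributed as a single coordinate of a uniform point on $\US{d_l}$, whose density is maximized at the origin with value at most $\sqrt{d_l/(2\pi)}$; integrating over $(-1/\sqrt{d_l}, 1/\sqrt{d_l})$ gives $\PR{|\innerprod{\va_j}{\va_p}| < 1/\sqrt{d_l}} \leq \sqrt{2/\pi}$. By independence across the $n_l-1$ points, at least one index $p^\ast$ satisfies $|\innerprod{\va_j}{\va_{p^\ast}}| \geq 1/\sqrt{d_l}$ except with probability $(\sqrt{2/\pi})^{n_l-1} = e^{-\frac12\log(\pi/2)(n_l-1)}$, exactly the first term in the failure bound. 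Separately, Gaussian and $\chi^2$ tail bounds control the three noise cross-terms uniformly over the $n_l-1$ in-subspace points, bounding each by a multiple of $\sigma(1+\sigma)\sqrt{\log N}/\sqrt m$ except with probability at most $n_l \cdot 7/N^3$, the second term.

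On the intersection of these two events I would evaluate the inner product at $p^\ast$: the signal part is at least $\frac{1}{(1+\sigma^2)\sqrt{d_l}}$, while the noise is at most $C\sigma(1+\sigma)\sqrt{\log N}/((1+\sigma^2)\sqrt m)$. Invoking \eqref{eq:condnoisyoutldet} with $c_1$ chosen small enough, and using $1/\sqrt{d_l} \geq 1/\sqrt{d_{\max}}$, makes the signal part alone exceed $(c+C')\sqrt{\log N}/\sqrt m$, so $|\innerprod{\vx_j}{\vx_{p^\ast}}|$ clears the threshold with the stated $c = 2.3\sqrt 6$; a union bound over the $n_l$ inliers of $\cS_l$ gives the claimed probability. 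I expect the main obstacle to be the alignment bound: obtaining the sharp constant $\sqrt{2/\pi}$ hinges on the precise estimate of the spherical marginal density at the origin, and one must ensure the noise bound holds uniformly over all $p$ so that it may legitimately be applied at the random, data-dependent index $p^\ast$.
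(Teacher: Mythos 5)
Your proposal is correct and is evidently the intended argument: the paper relegates this proof to \cite{heckel_robust_2013}, but your two-part strategy --- Gaussian tail bounds plus norm concentration and union bounds for the outlier side, and, for the inlier side, exhibiting one in-subspace neighbor $p^\ast$ with $|\innerprod{\va_j}{\va_{p^\ast}}| \geq 1/\sqrt{d_l}$ via the spherical marginal density estimate, combined with the same per-pair noise bound the paper uses in Step 1 of its appendix ($7e^{-\beta^2/2}$ with $\beta = \sqrt{6\log N}$) --- reproduces the theorem's two failure terms $e^{-\frac{1}{2}\log(\pi/2)(n_l-1)}$ and $n_l \cdot 7/N^3$ exactly, constants included. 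One minor caveat: for $d_l = 2$ the marginal density of a coordinate of a uniform point on the circle is maximized at $\pm 1$ rather than at the origin, so your density argument needs a separate (easy) check there, but the bound $\PR{|\innerprod{\va_j}{\va_p}| < 1/\sqrt{2}} = 1/2 \leq \sqrt{2/\pi}$ still holds and the conclusion is unaffected.
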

Due to space constraints, the proof of Thm.~\ref{thm:outldetnoisyc} is relegated to \cite{heckel_robust_2013}. 
Since \eqref{eq:condnoisyoutldet} can be rewritten as
$
N_0 \leq e^{\frac{m}{d_{\max}} \frac{c_1}{(1+\sigma^2)^2}   } - \sum_\l n_\l,
$
we can conclude that outlier detection succeeds even if the number of outliers scales exponentially in $m/\d_{\max}$, i.e., if $d_{\max}$ and $\sigma^2$ are kept constant, exponentially in the ambient dimension. For the noiseless case this was found previously in \cite{heckel_subspace_2012}. Note that this result does not need any assumptions on the relative orientations of the subspaces $\cS_l$. 
We finally remark that outlier detection can succeed even when $\sigma^2$ is large, provided that $\d_{\max}/m$ is sufficiently small. 
An outlier detection scheme for RSSC does not seem to be available. 

\section{Numerical results}

We measure performance in terms of the clustering error (CE), defined as the ratio of the number of misclassified points and the total number of points in $\X$. 
We generate $L=15$ subspaces of $\reals^{50}$ of equal dimension $d$ by choosing the corresponding orthonormal bases $\mU^{(l)} \in \reals^{m\times d}$  uniformly at random from the set of orthonormal matrices in $\reals^{m\times d}$. We set $\q=d$ and vary the number of points in each subspace $n=d \rho$ by varying $\rho$.  
The points in the individual subspaces are chosen at random according to the probabilistic model in Thm.~\ref{thm:noisycase}. 
The numerical results in Fig.~\ref{fig:noise} show that even when $\sigma^2$ is large, TSC succeeds, provided that $n$ is sufficiently large. 

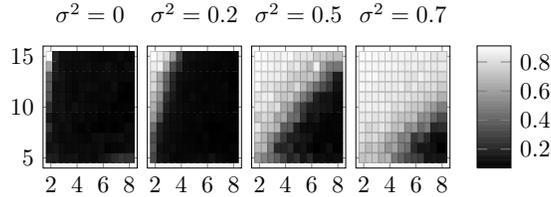
\begin{figure}
\centering
\begin{tikzpicture}[scale=0.9] 
\begin{groupplot}[group style={group size=5 by 1,horizontal sep=0.15cm,vertical sep=0.85cm,xlabels at=edge bottom, ylabels at=edge left,xticklabels at=edge bottom,yticklabels at=edge left},
width=0.18\textwidth,height=0.205\textwidth,/tikz/font=\small, colormap/blackwhite],point meta min = 0, point meta max=0.8]

  \nextgroupplot[title={$\sigma^2=0$}]
	 \addplot[mark=square*,only marks, scatter, scatter src=explicit,
	 mark size=2,colormap/blackwhite]
	 file {./CE_N0.dat};
	 	  
  
  \nextgroupplot[title={$\sigma^2=0.2$}]
	 \addplot[mark=square*,only marks, scatter, scatter src=explicit,
	 mark size=2,colormap/blackwhite]
	 file {./CE_N02.dat};

%
	 
  \nextgroupplot[title={$\sigma^2=0.5$}]
	 \addplot[mark=square*,only marks, scatter, scatter src=explicit,
	 mark size=2,colormap/blackwhite]
	 file {./CE_N05.dat};
%
	 
  \nextgroupplot[title={$\sigma^2=0.7$},colorbar]
	 \addplot[mark=square*,only marks, scatter, scatter src=explicit,
	 mark size=2,colormap/blackwhite]
	 file {./CE_N07.dat};

  \end{groupplot}
\end{tikzpicture}
\vspace{-0.3cm}
\caption{\label{fig:noise}CE as a function of the dimension of the subspaces, $d$, on the vertical and $\rho$ on the horizontal axis for different noise variances $\sigma^2$.}
\vspace{-0.3cm}
\end{figure}




\vspace{-0.2cm}



\section*{\label{app:thm:noisycase}Appendix: Proof Sketch of Thm.~\ref{thm:noisycase}}
The subspace detection property is certainly satisfied if for each $\vx_i^{(l)} \in \X_l$, and for each $\X_l$, the points in the set $\S_i$ that corresponds to $\vx_i^{(l)}$ are all in $\X_l$ and $|\S_i|=\q$. The latter condition is satisfied by construction (of the set $\S_i$). Regarding the former condition, consider w.l.o.g.~$\vx_i^{(l)}$ with corresponding set $\S_i$ and define $
z_j^{(k)} \defeq \left| \innerprod{\vx_j^{(k)}}{ \vx_i^{(l)} } \right|.
$
Note that for simplicity of exposition, the notation $z_j^{(k)}$ does not reflect the dependence on $\vx_i^{(l)}$. By definition $\S_i$  corresponds to points in $\X_l$ only if 
\vspace{-0.1cm}
\begin{align}
z_{(n_\l - \q)}^{(l)} > \max_{k\neq l, j} z_{j}^{(k)} 
\label{eq:tscsdpfox2}
\end{align}
\vspace{-0.35cm}

\noindent where the order statistics $z_{(1)}^{(\l)} \leq z_{(2)}^{(\l)} \leq ...\leq z_{(n_\l-1)}^{(\l)}$ are defined by sorting the $\{z_{j}^{(\l)}\}_{j \in [n_\l] \setminus i}$ in ascending order. 
Next, we upper-bound the probability of \eqref{eq:tscsdpfox2} being violated (for a given $\vx_i^{(l)}$). A union bound over all $N$ vectors $\vx_i^{(l)}, i\in [n_\l],\; l\in [L]$, will then yield the final result. 
We start by defining  
\[
\tilde z_j^{(k)} \defeq \innerprod{  \va_j^{(k)}}{  \herm{\mU^{(k)}} \mU^{(l)} \va_i^{(l)}}
\]
and noting that
$
z_j^{(k)} = \left|  \tilde z_j^{(k)}  +  e_j^{(k)}  \right|
$
with 
\[
e_j^{(k)} = \innerprod{ \ve_j^{(k)}}{ \ve_i^{(l)} }  +\innerprod{  \ve_j^{(k)}}{   \mU^{(l)} \va_i^{(l)}}  +\innerprod{   \mU^{(k)} \va_j^{(k)} }{ \ve_i^{(l)} }.
\]
With this notation,  
\begin{align}
\PR{z_{(n_\l-\q)}^{(l)} \leq \max_{k\neq l, j} z_{j}^{(k)} }  
&\leq \!
\PR{|\tilde z_{(n_\l-\q)}^{(l)}| \!-\! \max_{j\neq i} |e_j^{(l)}|   \leq \max_{k\neq l, j} |\tilde z_{j}^{(k)}|  +  \max_{k\neq l, j}  |e_{j}^{(k)}|   } \nonumber \\
&\leq 
\PR{|\tilde z_{(n_\l-\q)}^{(l)}| \leq \frac{\nu}{\sqrt{\d_\l}}   } \label{eq:assumeasdage} \\
&+\! \PR{ \alpha + 2 \epsilon \leq \max_{j\neq i} |e_j^{(l)}|   \!+\! \max_{k\neq l, j} |\tilde z_{j}^{(k)}|  \!+\!  \max_{k\neq l, j}  |e_{j}^{(k)}|   } \nonumber \\
&\leq 
\PR{|\tilde z_{(n_\l-\q)}^{(l)}| \leq \frac{\nu}{\sqrt{\d_\l}}   }  +\PR{ \max_{k\neq l, j} |\tilde z_{j}^{(k)}|  \geq \alpha  }  \nonumber \\
&+ \underbrace{\PR{\max_{j\neq i} |e_{j}^{(l)}| \geq  \epsilon   }   +  \PR{ \max_{k\neq l, j} |e_{j}^{(k)}| \geq  \epsilon  }}_{\leq \sum_{(j,k)\neq (i,l)} \PR{ \left|  e_j^{(k)} \right|   \geq  \epsilon }  } \label{eq:prtoboundt}
\end{align}
where $\alpha, \epsilon,$ and $\nu$ are chosen later and for \eqref{eq:assumeasdage} 
we used that for two random variables $X,Y$ and constants $\phi,\varphi$ satisfying $\phi \geq \varphi$ we have that 
\[
\PR{X \leq Y} 
\leq \PR{ \{X \leq \phi \} \cup \{\varphi \leq Y\} } 
\leq \PR{X\leq \phi} + \PR{\varphi \leq Y}.
\]
Specifically, for \eqref{eq:assumeasdage} we set $\phi=\frac{\nu}{\sqrt{\d_\l}}$ and $\varphi = \alpha + 2 \epsilon$, which leads to the assumption  $\alpha+ 2 \epsilon \leq \frac{\nu}{\sqrt{\d_\l}}$, resolved below. 
In the next steps of the proof, detailed in \cite{heckel_robust_2013}, we upper-bound the individual terms in \eqref{eq:prtoboundt}.
\paragraph*{Step 1:} With $\epsilon = \frac{2 \sigma(1+\sigma)}{\sqrt{m}} \beta$, for $\beta \geq \frac{1}{\sqrt{2\pi}}$ satisfying $\beta / \sqrt{m} \leq 1$, we have 
\begin{align}
\PR{ \left|  e_j^{(k)} \right|   \geq  \epsilon } \leq 7 e^{-\frac{\beta^2}{2}}. 
\label{eq:boundonnoise}
\end{align}
\paragraph*{Step 2:} Setting \\
\[
\alpha = \frac{\beta(1+\beta)}{\sqrt{d_l}}  \max_{k \neq l} \allowbreak \frac{1}{\sqrt{d_k}}  \norm[F]{ \herm{\mU^{(k)}} \mU^{(l)} }
\] 
for $\beta \geq 0$, we get
\begin{align}
&\PR{\max_{k\neq l, j} |\tilde z_{j}^{(k)}|  \geq \alpha }
\leq 3N e^{- \frac{\beta^2}{2}} .  \label{eq:maxboundadfa}
\end{align}
\paragraph*{Step 3:}
For $\nu = 2/3$ and $n_l/\q = \rho_l \geq 6$ there is a constant $c(\rho_l,\nu) > 1/20$ such that
\begin{align} 
\PR{|\tilde z_{(n_l-\q)}^{(l)}| \leq \frac{\nu}{\sqrt{d_l}} } \leq e^{-c (n_l-1)} \label{eq:znidlexa}.
\end{align}
Using \eqref{eq:boundonnoise}, \eqref{eq:maxboundadfa}, and \eqref{eq:znidlexa} in \eqref{eq:prtoboundt} and setting $\beta = \sqrt{6 \log N }$  yields 
\begin{align}
\PR{z_{(n_\l - \q)}^{(l)} \leq \max_{k\neq l, j} z_{j}^{(k)} } 
\leq \frac{10}{N^2} + e^{-c(n_\l-1)} .
\end{align}
Taking the union bound over all vectors $\vx_i^{(l)}, i \in [n_\l], \l \in [L]$, yields the desired lower bound on the probability of the subspace detection property to hold. 

Recall that for \eqref{eq:assumeasdage}  we imposed the condition  $\alpha+ 2 \epsilon \leq \frac{\nu}{\sqrt{\d_l}}$. With our choices for $\epsilon, \alpha$, and $\nu$ this condition 
is implied by \eqref{eq:condthmnoisycase}, for all $\l \in [L]$. This concludes the proof.

\end{document}